\newtheorem{theorem}{Theorem}
\newtheorem{remark}{Remark}
\newtheorem{definition}{Definition}
\def\widebar{\accentset{{\cc@style\underline{\mskip10mu}}}}
\def\Widebar{\accentset{{\cc@style\underline{\mskip13mu}}}}
\begin{document}

\captionsetup[figure]{labelfont={ }, name={Fig.}, labelsep=period} 
\pagestyle{empty}

\title{ Capacity Region of Gaussian Multiple-Access Channels with Energy Harvesting\\ and Energy Cooperation }

\author{Yunquan~Dong,~\IEEEmembership{Member,~IEEE},
        Zhengchuan Chen,~\IEEEmembership{Member,~IEEE},
        Pingyi~Fan,~\IEEEmembership{Senior Member,~IEEE}\\
\thanks{
Y. Dong is with the School of Electronic and Information Engineering,  Nanjing University of Information Science and Technology, Nanjing 210044, China (e-mail: yunquandong@nuist.edu.cn).

Z. Chen is with Information Systems Technology and Design Pillar, SUTD, Singapore (email: zhengchuan\_chen@sutd.edu.sg).

P. Fan is with Department of Electrical Engineering, Tsinghua University, Beijing 100084, China.  
Technology, Tsinghua University, Beijing 100084,  China (email: fpy@tsinghua.edu.cn).
%
}
}

\maketitle

\begin{abstract}
    We consider the capacity region of a $K$-user multiple access channel (MAC) with energy harvesting transmitters.
        Each user stores and schedules the randomly arriving energy using an energy buffer.
    Users can also perform energy cooperation by transmitting energy to other users or receiving energy from them.
        We derive the capacity region of this channel and show that 1) the capacity region coincides with that of a traditional $K$-user Gaussian MAC with energy cooperation, where the average power constraints are equal to the battery recharging rates of the energy harvesting case;
    2) each rate on the capacity region boundary can be achieved using the save-and-forward  power control and a fixed energy cooperation policy.
\end{abstract}

\begin{keywords}
   Multiple-access channel, capacity region, energy harvesting, energy cooperation.
\end{keywords}

\section{Introduction}
Internet of Things (IoT) has been studied extensively by both industry and academia  in recent years.
    In IoT networks, the Internet is connected to the physical world via ubiquitous wireless sensor networks (WSNs), which consists of a wide range of massively-deployed sensing devices.
Despite the wide applications of IoT networks, their performance  is  severely constrained by the capacity of sensor batteries.
        To address this issue, energy harvesting WSNs  (EH-WSNs) and the energy cooperation technology have been developed and widely researched \cite{WSN-eh-2011, Coop_ISIT_2012}.
     In EH-WSNs, each node can harvest energy (e.g., solar and wind power) from the ambient environment by employing an energy harvesting unit and an energy buffer.
This provides each node with almost a perpetual energy supply, and thus extends the network lifetime significantly.
    By employing an energy transceiver, each node can also transmit some energy to other nodes in one time slot and receive energy  from others in another time slot, so that the utilization of  the available energy over the network could be optimized, referred to as \textit{energy cooperation} \cite{Coop_ISIT_2012, Coop_Asilomar_2012, Coop_Tcom_2013, Zhi-2016}.

This paper investigates the capacity region of a $K$-user Gaussian multiple-access channel (MAC) using energy harvesting and energy cooperation, which corresponds to the uplink communication  of EH-WSNs.
    We aim at characterizing the performance limit of this channel and the capacity achieving power control/energy cooperation protocols.
    We first consider the capacity region of a Gaussian MAC with energy cooperation and average power constraints (i.e., powered by traditional batteries) as a baseline performance.
We show that each point on the capacity region boundary is achievable using a \textit{fixed energy cooperation policy} and time-sharing among cooperation policies is not required.
       Second, we investigate the capacity region of a Gaussian MAC with energy cooperation and energy harvesting constraints (i.e., powered by energy harvesting).
We show that the capacity of a Gaussian MAC with energy cooperation and energy harvesting constraints is equal to that of a Gaussian MAC with energy cooperation and average power constraints.
    In particular, the capacity region can be achieved using a \textit{save-and-forward power control},  where each node saves all the harvested energy for a certain period first, and performs information transmission as well as energy cooperation afterwards.

\subsection{Related Works}
In energy harvesting powered systems, the energy harvesting process is random over time.
    The harvested energy also suffers from the causality constraint, i.e., nodes can only use the energy harvested in the past.
Thus, each sensor may suffer from occasional energy shortages.
   To reduce energy shortage events and improve energy efficiency, the harvested energy needs to be scheduled carefully.
For the point-to-point fading channel, it has been shown that the directional water-filling power allocation achieves the maximum throughput  \cite{DF_JSAC_2011}.
    For multiple access channels, the maximum departure region can be achieved by a generalized water-filling based power allocation \cite{MAC_JCN_2012}.
Moreover, the sum-rate optimal scheduling for energy-harvesting interference channels was developed in \cite{IC_JCN_2012}.

Energy cooperation allows users to share their harvested energy through wireless power transfer \cite{ Xiong-2015, Xiong-2016}.
    By transferring some energy to nodes with better channel conditions, energy cooperation can enhance  network performance significantly.
For example, by scheduling energy among users over time, the two-dimensional directional water-filling scheme achieves the optimal throughput of both two-way channels and two-user multiple access channels \cite{EC_TWC_2015,EC_ITA_2013,Kaya_ITA_2016};
    by transferring energy to source nodes, the digital network coding with energy cooperation even outperforms the physical network coding \cite{Zhi-2016} in two-way relay networks.
    Moreover, nodes can also perform energy cooperation by using cooperative relays or mobile control centers \cite{Litao-2016,Litao-2016-2}.
Recently, \cite{Gurakan-2016} investigated the interesting interplay between data cooperation and energy cooperation in multi-access channels, which sheds some light on how to perform cooperation efficiently.

In addition to exploring energy/packet scheduling schemes and energy cooperation policies, the performance limits of  energy harvesting powered communications have also been studied  \cite{Ulukus-2012-awgn, DongIn-2015-ICeh, Yangjing-2014-awgn}.
    In \cite{Ulukus-2012-awgn}, the authors investigated information transmission over Gaussian channels using energy-harvesting transmitters.
By employing either the save-and-forward or the best-effort transmission scheme, it has been shown that the capacity of a Gaussian channel with average power constraint is also achievable by energy harvesting transmitters.
    Also, the asymptotic equivalence between energy harvesting powered sensing and traditionally powered sensing was shown in \cite{Yangjing-2014-awgn}.
	In addition, the achievable average rate region of the symmetric two-user interference channel with energy harvesting and energy cooperation was presented in \cite{DongIn-2015-ICeh}.

\textit{Notation}: Boldface letters indicate vectors and $(\cdot)^{\textsf{T}}$ denotes the transpose operation.
$\Phi=\{1,2,\cdots, K\}$ is a set of integers.
For an $n$-dimensional vector $\boldsymbol{x}=\{x_1,x_2\cdots,x_n\}\in \mathcal{R}^n$ and a subset $S\subseteq \Phi $, $\boldsymbol{x}(S)$ denotes $\sum_{k\in S}x_k$.

\section{System Model}\label{sec:2}
\begin{figure}[!ht]
\centering
\includegraphics[width=2.8in]{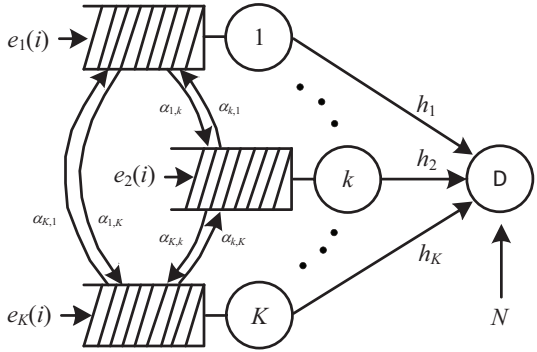}
\caption{The $K$-user Gaussian MAC with energy cooperation and energy harvesting transmitters.} \label{fig:mac model}
\end{figure}

Consider a $K$-user Gaussian MAC as shown in Fig. \ref{fig:mac model}, where
   each user sends its own message to the receiver independently.
Assume that each user has an energy harvesting unit so that it can  collect energy (e.g., solar energy and wind energy) from the environment.
   We also assume that each user is equipped with an energy transmitting/receiving unit so that it can transmit some of the harvested energy to other users, as well as receive the energy transmitted by other users.

\subsection{Energy Harvesting Model} \label{sbusec:2_1}

Suppose  time is slotted and  slot length is $T$.
In slot $i$, let $x_k(i)$ be the transmitted signal of user $k$ and $y(i)$ be the received signal at the receiver. We have
\begin{equation}\label{eq:mac_model}
  y(i)=\sum_{k=1}^K h_k x_{k}(i) + z(i),
\end{equation}
 where $h_k$ is the channel gain between user $k$ and the receiver, $z(i)$ is the Gaussian noise with zero-mean and variance $\sigma^2$.
We assume that $h_k$ is constant throughout each period of information transmission.

Let $e_k(i)$ denote the amount of energy that user $k$ harvests in slot $i$ and $\textbf{E}=[ e_k(i) ]_{K\times N}$ denote the energy harvesting matrix of all users over $N$ slots.
     Denote $\bar{e}_k=\mathbb {E} [e_k(i)]$ as the \textit{energy harvesting rate} (equals to the battery recharging rate) and $\widebar{\boldsymbol{e}}=[\bar{e}_1,\cdots,\bar{e}_K]^{\textsf{T}}$ as the energy harvesting rate vector.

To investigate the performance limit of the channel, we consider a subset of the following assumptions.
\begin{itemize}
  \item [A1] The energy buffer at each user is infinitely large.
  \item [A2] $\{e_k(i), i=1,2,\cdots\}$ is an ergodic, independent and identically distributed sequence.
  \item [A3] For each user, the expectation of the harvested energy in a slot is finite, i.e., $\bar{e}_k<\infty$.
\end{itemize}

Assumptions A1--A3 are used throughout the paper.
   By assuming energy buffer to be infinite, energy overflow is avoided. In fact, the capacity of a small button battery is more than 200 milliampere hour (mAh), which is large enough for most energy harvesting scenarios~\cite{button-2015}.
Assumption A2 implies that $\lim_{N\rightarrow\infty}\frac1N \sum e_k(i)=\mathbb{E}[e_k(i)]=\bar{e}_k$.

\subsection{Energy Cooperation Model} \label{subsec:2_2}

    In each slot, user $k$ may transmit some energy $\sum_{l=1}^K \delta_{kl}(i)$ to other users and receive  energy $\sum_{l=1}^K \alpha_{lk} \delta_{lk}$ from other users.
We denote $\alpha_{kj}\in(0,1)$ as the  energy transfer efficiency from user $k$ to user $j$\footnote
{
    When $K>2$, user $k$ can transfer energy to user $j$ by direct transmission or through relaying by some other users.
       For an energy path $k\rightarrow u_1\rightarrow \cdots\rightarrow u_m\rightarrow j$, the corresponding energy efficiency is $\alpha_{k,u_1}\alpha_{u_1,u_2}\cdots \alpha_{u_m,j}$.
    The most efficient energy route can be found using the Dijkstra's algorithm \cite{Dijkstra-1959}, by considering users as  vertices and $w_{kj}=-\ln\alpha_{kj}$ as the edge weight.
        In this paper, we are using $\alpha_{ij}$ to represent the resulting maximum energy transfer efficiency from user $k$ to user $j$.
}.
In slot $i$, if user $k$ transmits $\delta_{kj}(i)$ amount of energy to user $j$, then the received energy at user $j$ is $\alpha_{kj}\delta_{kj}(i)$.
   Since a node cannot transmit energy to itself, we denote $\alpha_{kk}=o$ and $\delta_{kk}(i)=0$,  where $o$ is an infinite small positive number.
In addition, we define the energy transfer matrix in slot $i$ as $\textbf{D}(i)=[\delta_{kj}(i)]_{K\times N}$.

We define the \textit{consumed power} of user $k$ in slot $i$ as
\begin{equation}\label{df:p_bar}
  \widetilde{p}_k(i)=p_k(i)+\frac{1}{T}\sum_{l=1}^K (\delta_{kl}(i)-\alpha_{lk}\delta_{lk}(i)),
\end{equation}
where $p_k(i)$ is the transmit power of user $k$ in slot $i$ and $T$ is  slot length.
    We call it consumed power because $T\widetilde{p}_k(i)$ is the total amount of energy depleted from user $k$'s energy buffer in the slot.
In addition, we denote $\widetilde{\boldsymbol{p}}_k(i)=[\widetilde{p}_1(i),\cdots,\widetilde{p}_K(i)]^{\textsf{T}}$ and $\widetilde{\textbf{P}}=[\widetilde{p}_k(i)]_{K\times N}$ as the consumed power vector and the consumed power matrix, respectively.

At the end of slot $i$, the remaining energy of user $k$ can be expressed as
\begin{equation}\label{cons1-non_neg}
    E_k^\textrm{r}(i)=\sum_{j=1}^i \left( e_k(j) - T \widetilde{p}_k(j) \right).
\end{equation}

Note that $\widetilde{p}_k(i)$ should be chosen such that the remaining energy $E_k^\textrm{r}(i)$ is non-negative.
   Thus, the consumed power must satisfy the following constraint (CSTR).

\hspace{-4mm} CSTR1~(\textit{Energy Causality Constraint}): In each slot, the consumed power $\widetilde{p}_k(i)$ of each user $k$ satisfies
                \begin{equation}\label{cstr:energy_casl}
                  \sum_{j=1}^i T \widetilde{p}_k(i) \leq \sum_{j=1}^i e_k(j).
                \end{equation}

From CSTR1, it is clear that the causality of energy arrivals imposes a restriction on the consumed power $\widetilde{p}_k(i)$, other than on the transmit power $p_k(i)$.
   Therefore, we shall investigate the energy cooperation policy and power control policy based on $\widetilde{p}_k(i)$ in the following sections.

\subsection{Optimal Policies} \label{subsec:2_3}
Under Assumptions A1--A3, the achievable rate region can be found by re-distributing energy among users and  scheduling energy over time.
   To be specific, an energy cooperation policy determines how much energy will be transferred to other users and a power control policy determines how much power should be allocated to each user in each slot.

\begin{definition}
A \textit{power control policy} $\mathcal{P}$ is a mapping from the energy harvesting matrix $\textbf{E}$ to the consumed power matrix $\widetilde{\textbf{P}}$.
    Given $\textbf{E}$, $\mathcal{P}_{k,i}(\textbf{E})$ is the allocated consumed power $\widetilde{p}_k(i)$ of user $k$ in slot $i$.
\end{definition}

 Note that a power control policy $\mathcal{P}$ is feasible only if the resulting $\widetilde{\textbf{P}}$ is positive and satisfies constraint CSTR1.
    The set of all feasible power control policy is denoted as
\begin{eqnarray}\label{df:p_fsble} \nonumber
  \mathcal{F}_P \hspace{-3 mm}&=&\hspace{-3 mm}\Bigg\{ \mathcal{P}:
                        \sum_{j=1}^i \widetilde{p}_k(i)T \leq \sum_{j=1}^i e_k(i),\widetilde{p}_k(i)\geq0,  \\
               & &~~~~~~~~~~~~~~~~~~
               1\leq k\leq K,1\leq i\leq N \Bigg\}.
\end{eqnarray}

\begin{definition}
An \textit{energy cooperation policy} $\mathcal{D}$ is a mapping from  the consumed power vector $\widetilde{\boldsymbol{p}}(i)$ to  the energy transfer matrix $\textbf{D}(i)$.
      Given $\widetilde{\boldsymbol{p}}(i)$, $\mathcal{D}_{kj}(\widetilde{\boldsymbol{p}}(i))$ can be interpreted as the transferred energy $\delta_{kj}(i)$ from user $k$ to user $j$.

According to Lemma 1 in \cite{EC_ITA_2013}, for any energy cooperation policy where some users transfer energy and receive energy at the same time, we can find an equivalent energy cooperation (resulting to the same transmit power vector) where each user either transfer energy or receive energy.
   Therefore, we do not need full-duplex energy transceivers and hence can only focus on energy cooperation policies satisfying $\delta_{jk}(i)\delta_{kl}(i)=0$. 
By the definition of $\widetilde{p}_k(i)$ (cf. (\ref{df:p_bar})) and the fact that both $p_k(i)$ and $\delta_{lk}(i)$ are non-negative, we thus define the set of all feasible energy cooperation policy as
\begin{eqnarray}\label{df:d_fslble}\nonumber
  \mathcal{F}_D\hspace{-3mm}&=&\hspace{-3mm}\Bigg\{ \mathcal{D}: \delta_{kj}(i) \geq 0, \sum_{j=1}^K \delta_{kj}(i)\leq T\widetilde{p}_k(i),  \\
                             &&~~~~~~~~~~~~~~~~~~
                             1\leq k\leq K, 1\leq j\leq K  \Bigg\}.
\end{eqnarray}
\end{definition}

\subsection{Normalized Channel Model} \label{subsec:2_4}
Without loss of generality, we perform analysis based on a \textit{normalized Gaussian MAC} with unit channel gains and unit noise power  \cite{EC_ITA_2013}.
   In particular, the signal-to-noise ratio (SNR) of each user in the normalized MAC is the same as that of the original channel, so that the capacity region of the normalized channel is also  equal to that of the original channel.
To be specific, the normalized Gaussian MAC is obtained by scaling the transmit power $p_k(i)$ of user $k$ with $h_k/\sigma^2$, scaling the harvested energy $e_k(i)$ with $h_k/\sigma^2$, and scaling the energy transmission efficiency $\alpha_{kj}$ with $h_j/h_k$.
    Note that  $\alpha_{kj}$ may be larger than one after the scaling.
   Nevertheless, this is only an artifact due to mathematical formulation and does not mean that the transferred energy will be amplified.

\begin{figure*}[htp]

\hspace{-6 mm}
    \begin{tabular}{cc}
    \subfigure[The capacity regions of 2-user Gaussian MAC.]
    {
    \begin{minipage}[t]{0.5\textwidth}
    \centering
    {\includegraphics[width = 88mm] {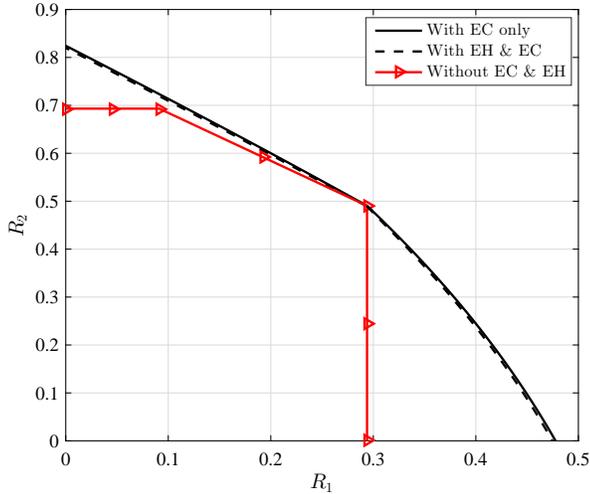} \label{fig:region}}
    \end{minipage}
    }

 \subfigure[The sum capacity of 2-user Gaussian MAC.]
    {
    \begin{minipage}[t]{0.5\textwidth}
    \centering
    {\includegraphics[width = 88mm] {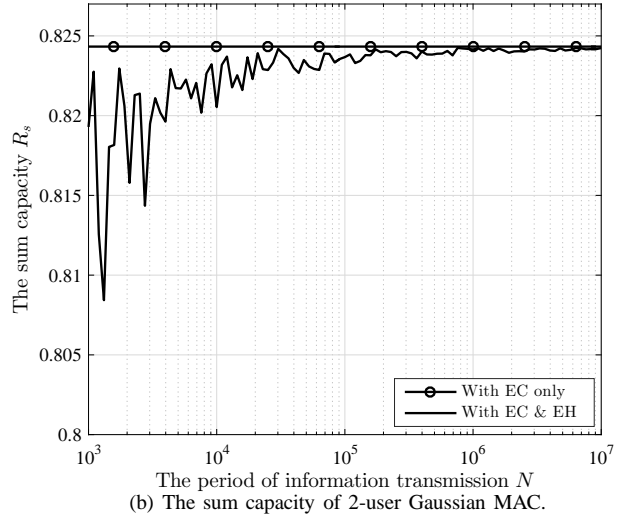} \label{fig:sum_c}}
    \end{minipage}
    }

    \end{tabular}

\caption{Approaching AWGN capacity region using energy harvesting.}
\end{figure*}

\section{The Capacity Regions} \label{sec:2}
In this section, we first investigate the capacity of Gaussian MACs with energy cooperation and average power constraint.
    Next, we study the capacity region of Gaussian MACs with energy cooperation as well as energy harvesting, and establish the equivalence between them.

\subsection{Gaussian MAC with Energy Cooperation}\label{subsec:2_1}

Given the transmit power $p_k$ of each user, the capacity region of a normalized Gaussian MAC with unit channel gain and unit noise power is known as \cite{Cover_1991}
\begin{equation}\label{cr:awgn}
  \mathcal{C}_\textrm{g}(\boldsymbol{p})\hspace{-1mm}=\hspace{-1mm}
    \left\{ \hspace{-0.5 mm} \boldsymbol{R}:\boldsymbol{R}(S) \hspace{-0.5 mm}
            \leq \frac12 \log\left( \hspace{-0.5 mm} 1+\sum_{k\in S}p_k \hspace{-0.5 mm}\right),
            ~\forall S\subseteq \Phi \hspace{-0.5 mm} \right\},
\end{equation}
where  $\boldsymbol{p}=[p_1,\cdots,p_K]^{\textsf{T}}$ is the transmit power vector, $\boldsymbol{R}=[ R_1,\cdots, R_K ]^{\textsf{T}}$ is the rate vector and $\Phi=\{1,2,\cdots,K\}$ is the set of users.
   The capacity region lies in the positive quadrant and has $K!$ vertices.

Given the consumed power $\widetilde{p}_k$ and an energy cooperation policy $\mathcal{D}$, the transmit power of user $k$ would be $p_k=\widetilde{p}_k-\frac{1}{T}\sum_{l=1}^K (\delta_{kl}-\alpha_{lk}\delta_{lk})$.
   Therefore, the capacity of the Gaussian MAC under a given energy cooperation policy is
\begin{eqnarray}\label{cr:ener-coop-D} \nonumber
 \hspace{-3mm} \mathcal{C}_{\textrm{g}}(\widetilde{\boldsymbol{p}},\mathcal{D}) \hspace{-3.2mm}&=&\hspace{-3.2mm}\Bigg\{ \boldsymbol{R}:\boldsymbol{R}(S) \leq
            \frac12   \log  \Bigg(  1 + \sum_{k\in S} \\
    & &  \left( \widetilde{p}_k  -  \frac{1}{T}
          \sum_{l=1}^K (\delta_{kl}-\alpha_{lk}\delta_{lk}) \right)  \Bigg) ,
  \forall S\subseteq \Phi \Bigg\}.
\end{eqnarray}

By considering all possible energy cooperation policies, the following theorem presents the capacity region of a Gaussian MAC with energy cooperation.

\begin{theorem}\label{th:cr-ener-coop}
   The capacity region of a normalized Gaussian MAC with energy cooperation is
   \begin{equation}\label{cr:ener-coop}
     \mathcal{C}_{\textrm{g,EC}}(\widetilde{\boldsymbol{p}})=\bigcup_{\mathcal{D}\in \mathcal{F}_D} \mathcal{C}_{\textrm{g}}(\widetilde{\boldsymbol{p}},\mathcal{D}),
   \end{equation}
   where $\mathcal{F}_D$ is the set of feasible energy cooperation policies.
\end{theorem}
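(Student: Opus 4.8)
\emph{Proof proposal.} The plan is to establish the two inclusions separately: achievability, $\bigcup_{\mathcal{D}\in\mathcal{F}_D}\mathcal{C}_{\textrm{g}}(\widetilde{\boldsymbol{p}},\mathcal{D})\subseteq\mathcal{C}_{\textrm{g,EC}}(\widetilde{\boldsymbol{p}})$, and its converse. Achievability is the routine direction. For any fixed $\mathcal{D}\in\mathcal{F}_D$, applying it identically in every slot turns the channel into an ordinary $K$-user Gaussian MAC whose per-user average transmit powers are $p_k=\widetilde{p}_k-\frac{1}{T}\sum_{l=1}^K(\delta_{kl}-\alpha_{lk}\delta_{lk})$; these are non-negative since the constraints defining $\mathcal{F}_D$ give $\frac{1}{T}\sum_{l}\delta_{kl}\le\widetilde{p}_k$ while $\alpha_{lk}\delta_{lk}\ge 0$. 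Hence the classical Gaussian MAC coding theorem \cite{Cover_1991} makes every rate vector in $\mathcal{C}_{\textrm{g}}(\boldsymbol{p})=\mathcal{C}_{\textrm{g}}(\widetilde{\boldsymbol{p}},\mathcal{D})$ achievable, and taking the union over $\mathcal{D}\in\mathcal{F}_D$ yields the first inclusion.

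For the converse I would argue that a time-varying energy cooperation policy cannot enlarge the region. Take any reliable code over a block of $N$ slots using a feasible, possibly slot-dependent, policy $\{\textbf{D}(i)\}_{i=1}^N$ and set $\bar{\delta}_{kj}=\frac{1}{N}\sum_{i=1}^N\delta_{kj}(i)$. Since the constraints $\delta_{kj}(i)\ge 0$ and $\sum_{j}\delta_{kj}(i)\le T\widetilde{p}_k$ are linear and the consumed power $\widetilde{p}_k$ is fixed, the constant policy $\bar{\textbf{D}}=[\bar{\delta}_{kj}]$ again lies in $\mathcal{F}_D$, and the average transmit power of user $k$ over the block equals $\bar{p}_k=\widetilde{p}_k-\frac{1}{T}\sum_{l=1}^K(\bar{\delta}_{kl}-\alpha_{lk}\bar{\delta}_{lk})$. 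Because the channel gains are constant throughout the block, the converse half of the Gaussian MAC capacity theorem under the average power constraint $\bar{\boldsymbol{p}}$ confines the rate vector to $\mathcal{C}_{\textrm{g}}(\bar{\boldsymbol{p}})=\mathcal{C}_{\textrm{g}}(\widetilde{\boldsymbol{p}},\bar{\textbf{D}})\subseteq\bigcup_{\mathcal{D}\in\mathcal{F}_D}\mathcal{C}_{\textrm{g}}(\widetilde{\boldsymbol{p}},\mathcal{D})$, giving the second inclusion. As a complementary check (and to show no hidden convex-hull or closure operation is needed), I would also verify directly that the right-hand side is closed and convex: $\mathcal{F}_D$ is compact and mapped continuously to $\boldsymbol{p}$, and for $\mathcal{D}_1,\mathcal{D}_2\in\mathcal{F}_D$ and $\lambda\in[0,1]$ the policy $\lambda\mathcal{D}_1+(1-\lambda)\mathcal{D}_2$ is feasible with induced power $\lambda\boldsymbol{p}^{(1)}+(1-\lambda)\boldsymbol{p}^{(2)}$, so concavity of $x\mapsto\frac12\log(1+x)$ gives $\mathcal{C}_{\textrm{g}}(\widetilde{\boldsymbol{p}},\lambda\mathcal{D}_1+(1-\lambda)\mathcal{D}_2)\supseteq\lambda\,\mathcal{C}_{\textrm{g}}(\widetilde{\boldsymbol{p}},\mathcal{D}_1)+(1-\lambda)\,\mathcal{C}_{\textrm{g}}(\widetilde{\boldsymbol{p}},\mathcal{D}_2)$.

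The step I expect to be the main obstacle is the converse, and specifically making the ``average the cooperation policy'' reduction airtight: one must confirm that replacing $\{\textbf{D}(i)\}$ by its time average leaves unchanged the quantity that governs capacity for a time-invariant Gaussian MAC, namely the per-user average transmit power; that feasibility is preserved (this is where the linearity of $\mathcal{F}_D$ and the constancy of $\widetilde{\boldsymbol{p}}$ are used); and that the standard MAC converse applies verbatim under an average-power constraint with fixed channel gains. Everything else — the non-negativity of the induced transmit powers and the convexity and closedness of the union — is bookkeeping.
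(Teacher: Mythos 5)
Your proposal is correct, and its skeleton (achievability as a trivial union of classical Gaussian-MAC regions, plus a converse) matches the paper's. The execution of the converse differs in a way worth noting. The paper writes out a self-contained weak converse from first principles: it fixes a single cooperation matrix $\textbf{D}$ for the whole block, runs the Fano-inequality chain $n\boldsymbol{R}(S)\le\sum_i I([X_{k,i}]_{k\in S};Y_i\,|\,[X_{k,i}]_{k\in\widebar{S}})+n\epsilon$, and closes with Jensen's inequality and the per-user average power bound $\frac1n\sum_i P_{k,i}\le\widetilde{p}_k-\frac1T\sum_l(\delta_{kl}-\alpha_{lk}\delta_{lk})$. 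You instead invoke the classical MAC converse as a black box and supply the one reduction the paper leaves implicit: that a slot-varying policy $\{\textbf{D}(i)\}$ can be replaced by its time average $\bar{\textbf{D}}$, which remains in $\mathcal{F}_D$ by linearity of the constraints and induces exactly the block-averaged transmit powers, so no time-varying policy escapes the union. This is a genuine strengthening — the paper's converse, as written, only rules out codes that use a fixed $\mathcal{D}$ throughout — and your closing convexity argument ($\lambda\mathcal{D}_1+(1-\lambda)\mathcal{D}_2$ feasible, induced power linear in $\mathcal{D}$, $\tfrac12\log(1+x)$ concave) is what actually substantiates the paper's remark that time-sharing among cooperation policies is unnecessary. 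What the paper's route buys is an explicit, citation-free derivation of the mutual-information bounds; what yours buys is a cleaner modularization and an airtight treatment of time-varying cooperation. Both are valid proofs of the stated equality.
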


\begin{proof}
    See Appendix \ref{app:prf-thm1}.
\end{proof}

This theorem demonstrates the improvement in capacity region due to the energy cooperation among users, which redistributes the available energy as desired.
   The above characterization also shows that each point on the capacity region boundary can be achieved by a single energy cooperation policy, and thus time sharing among energy cooperation policies is not required.

\subsection{Gaussian MAC with Energy Cooperation and Energy Harvesting}\label{subsec:2_2}

In the energy harvesting scenario, the harvested energy needs to be scheduled by controlling the transmit power of nodes.
   Therefore, the capacity region of a $K$-user Gaussian MAC with energy cooperation and energy harvesting is the set of achievable information rates under all possible power control policies and energy cooperation policies.

\begin{theorem}\label{th:capacity_eceh}
   The capacity region of a $K$-user Gaussian MAC with energy cooperation and energy harvesting is determined by the energy harvesting rate vector $\bar{\boldsymbol{e}}=[\bar{e}_1,\cdots,\bar{e}_K]^{\textsf{T}}$. In particular, we have
    \begin{equation}\label{rt:th_capacity_eceh}
        \mathcal{C}_\textrm{EC,EH}(\bar{\boldsymbol{e}}) = \mathcal{C}_\textrm{g,EC}(\widetilde{\boldsymbol{p}})|_{\widetilde{\boldsymbol{p}}=\frac{\bar{\boldsymbol{e}}}{T}}.
    \end{equation}
\end{theorem}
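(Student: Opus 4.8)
The plan is to prove the theorem by establishing the two inclusions $\mathcal{C}_\textrm{EC,EH}(\bar{\boldsymbol{e}}) \subseteq \mathcal{C}_\textrm{g,EC}(\bar{\boldsymbol{e}}/T)$ (a converse) and $\mathcal{C}_\textrm{g,EC}(\bar{\boldsymbol{e}}/T) \subseteq \mathcal{C}_\textrm{EC,EH}(\bar{\boldsymbol{e}})$ (an achievability argument) separately, treating the energy-harvesting channel as a channel with block length $N\to\infty$ under a feasible pair $(\mathcal{P},\mathcal{D})\in\mathcal{F}_P\times\mathcal{F}_D$.

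\textbf{Converse.} Suppose $\boldsymbol{R}$ is achievable with energy harvesting. Summing CSTR1 over the whole block and dividing by $NT$, the time-averaged consumed powers $\widehat{\widetilde{p}}_k=\frac1N\sum_{i=1}^N\widetilde{p}_k(i)$ obey $\widehat{\widetilde{p}}_k\le \frac1{NT}\sum_{i=1}^N e_k(i)$, and by Assumption A2 (i.i.d., ergodic arrivals) the right-hand side converges almost surely to $\bar{e}_k/T$. Likewise, feasibility in $\mathcal{F}_D$ gives $\sum_{j}\widehat{\delta}_{kj}\le T\,\widehat{\widetilde{p}}_k$ for the time-averaged transfers $\widehat{\delta}_{kj}=\frac1N\sum_i\delta_{kj}(i)$, so the averaged transfer profile is, asymptotically, a fixed feasible policy $\widehat{\mathcal{D}}$ for consumed power $\bar{e}_k/T$. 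A standard Fano/MAC converse bounds $\boldsymbol{R}(S)$ by $\frac1N\sum_i \frac12\log\bigl(1+\sum_{k\in S}p_k(i)\bigr)+o(1)$; since $\log$ is concave, Jensen's inequality replaces this by $\frac12\log\bigl(1+\sum_{k\in S}\widehat{p}_k\bigr)$ with $\widehat{p}_k$ the time-averaged transmit power, and because $\widehat{p}_k=\widehat{\widetilde{p}}_k-\frac1T\sum_l(\widehat{\delta}_{kl}-\alpha_{lk}\widehat{\delta}_{lk})$, the limiting point lies in $\mathcal{C}_\textrm{g}(\bar{\boldsymbol{e}}/T,\widehat{\mathcal{D}})\subseteq \mathcal{C}_\textrm{g,EC}(\bar{\boldsymbol{e}}/T)$ by Theorem \ref{th:cr-ener-coop}.

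\textbf{Achievability.} Fix $\boldsymbol{R}$ in the interior of $\mathcal{C}_\textrm{g,EC}(\bar{\boldsymbol{e}}/T)$; by Theorem \ref{th:cr-ener-coop} it is already achieved with the constant consumed power $\widetilde{p}_k=\bar{e}_k/T$ under a single feasible cooperation policy $\mathcal{D}^\star$. Use a save-and-forward schedule over $N_1+N_2$ slots: during the first $N_1$ slots all users transmit nothing and merely store their harvested energy, and during the next $N_2$ slots each user applies the fixed policy $\mathcal{D}^\star$ together with a constant consumed power $\bar{e}_k/T-\eta$ for a small $\eta>0$. By Assumption A1 the buffer retains the $\approx N_1\bar{e}_k$ energy accumulated in the saving phase, and because the per-slot spending $\bar{e}_k/T-\eta$ is strictly below the harvesting rate, the remaining-energy process of (\ref{cons1-non_neg}) behaves, in the transmission phase, like a random walk with positive drift started from a large height; a Chernoff/SLLN argument (again via A2) shows it stays non-negative over the whole block with probability tending to one, so CSTR1 is met. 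The effective rate of user $k$ over the block equals $\frac{N_2}{N_1+N_2}$ times the rate achievable at consumed power $\bar{e}_k/T-\eta$ under $\mathcal{D}^\star$; letting $N_1/N_2\to0$, $N_2\to\infty$, and then $\eta\to0$, continuity of the bound in (\ref{cr:ener-coop-D}) makes this approach any point strictly inside $\mathcal{C}_\textrm{g,EC}(\bar{\boldsymbol{e}}/T)$, and taking closures yields the full region.

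\textbf{Main obstacle.} The delicate step is the achievability direction, specifically verifying that the random energy arrivals never force a causality violation under the fixed schedule: one needs a concentration bound on the cumulative harvest $\sum_{j\le i}e_k(j)$ that holds \emph{uniformly} over all prefixes $i$ of the transmission phase, which is precisely where A1 (no stored energy is wasted) and A2--A3 (i.i.d./ergodic arrivals with finite mean) enter. The converse is comparatively routine, requiring only the first-moment consequence of A2 together with the reduction of a time-varying cooperation policy to a fixed one afforded by Theorem \ref{th:cr-ener-coop} and concavity of the logarithm.
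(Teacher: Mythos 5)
Your proposal is correct and follows essentially the same route as the paper: the converse reduces the energy-causality constraint to an average power constraint via the empirical mean of the harvested energy (your Fano/Jensen treatment of the time-averaged cooperation policy is a somewhat more explicit rendering of the paper's brief argument), and the achievability uses the identical save-and-transmit schedule with a fixed cooperation policy $\mathcal{D}^\star$, a slightly backed-off consumed power, and a law-of-large-numbers argument that the positive-drift remaining-energy process stays non-negative uniformly over the transmission phase. The only minor caveat is that under Assumption A3 (finite mean only) the concentration step should rest on the weak/strong law rather than a Chernoff bound, which is exactly what the paper does.
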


\begin{proof}
   See appendix \ref{app:prf-thm2}.
\end{proof}

Theorem \ref{th:capacity_eceh} essentially says that under perfect power control, a Gaussian MAC  using  energy cooperation and random energy harvesting  achieves the same capacity region as a Gaussian MAC with energy cooperation and powered by traditional battery supplies.
   In addition, the capacity region relies on the energy harvesting rate vector $\bar{\boldsymbol{e}}$ and is irrelevant to the fluctuations of the energy harvesting process.

In the proof of Theorem \ref{th:capacity_eceh}, we have generalized the analysis on point-to-point channels in \cite{Ulukus-2012-awgn} to  multiple-access channels with energy cooperation.
    Although the authors has noted that the result may be generalized to multi-user channels in \cite{Ulukus-2012-awgn},  it is not clear whether it is applicable to multi-user networks with energy cooperation, since the energy cooperation greatly complicates the analysis.
   In our model, there are $K$ independent information sources.
Also, we need to provide each user with both the energy for energy cooperation and the energy for information transmission.
   Theorem \ref{th:capacity_eceh}  is proved based on the save-and-transmit power control, where each node saves all the harvested energy in the buffer for a sufficiently long period and transmits information in the rest of the period.
When the both periods go to infinity, all nodes can perform energy cooperation and information transmission using a constant consumed power equal to the energy harvesting rate, i.e., $\widetilde{p}_k=\frac{\bar{e}_k}{T}$, without energy shortages.

\begin{remark}
    Theorem \ref{th:capacity_eceh} shows that the randomness of the energy harvesting process does not degrade the capacity region of the channel and the save-and-transmit power control is a capacity region achieving policy.
\end{remark}

\begin{remark}
    Since each user can use a constant consumed power $\widetilde{p}_k=\frac{\bar{e}_k}{T}$,  we can use the same energy cooperation policy as that for the MAC powered by traditional batteries, yet achieving the rates on the capacity region boundary.
        That is, even in the energy harvesting scenario, the capacity boundary can be achieved using a fixed energy cooperation policy.
\end{remark}

\section{Simulation}\label{sec:4}
Consider a 2-user Gaussian MAC as shown in Fig. \ref{fig:mac model}.
     We set channel gains to $h_1=0.8, h_2=1.5$ and set the energy transfer efficiencies to $\alpha_{12}=0.8, \alpha_{21}=0.5$.
For simplicity, we assume that the slot length is $T_s=1$ s, the system bandwidth is $W=1$ Hz, and the Gaussian noise at the receiver has zero mean and unit variance.
    We also assume that the energy harvested by the two users in a slot follows uniform distribution $\mathcal{U}(0,2)$ and $\mathcal{U}(0,4)$, respectively.
Thus, the energy harvesting rates of the two users are $\bar{e}_1=1$ Joule/sec and $\bar{e}_2=2$ Joule/sec, respectively.

We present the capacity region of the Gaussian MAC with both energy harvesting (EH) and energy cooperation (EC) in Fig. \ref{fig:region} (the dashed curve).
    Compared with the capacity region of a traditional Gaussian MAC using average transmit power $p_1=\bar{e}_1$ and $p_2=\bar{e}_2$ (marked by $\rhd$), it is seen that there are large capacity gains due to energy cooperations.
In fact, by using energy cooperation, energy is transferred to the user who has better channel condition and can utilize the energy more efficiently.
    Also, we see that the capacity region exactly coincides with that of a Gaussian MAC using energy cooperation and average transmit power $p_1=\bar{e}_1$ and $p_2=\bar{e}_2$  (the solid curve), which means that random energy arrival can achieve the same performance as traditional power supplies in the limitation sense.

As the transmission period $N$  increases, we investigate how fast the sum capacity of the MAC under energy harvesting constraints approaches that under average power constraints, as shown in Fig. \ref{fig:sum_c}.
    For each $N$, we run the simulation independently by generating a realization of energy harvesting process and calculating the corresponding averaged sum rate based on  save-and-transmit scheme.
We set the period of energy saving to $h(N)=N/10000$.
    On one hand, the capacity loss due to this period is negligible. On the other hand, this setting ensures the energy saving period to be large enough to eliminate the energy shortage events without requiring $N$ to be infinitely large, which makes the simulation implementable.
As is shown, when $N$ gets larger and larger, the difference between the two sum capacities become smaller and smaller, and finally vanishes.

\section{Conclusion}\label{sec:5}
We considered the capacity region $\mathcal{C}_\textrm{EC,EH}(\bar{\boldsymbol{e}})$  of Gaussian MACs with energy cooperation and energy harvesting.
    It has been proved that each rate in the capacity region is achievable using a fixed energy cooperation policy and the save-and-forward power control policy.
It is also shown that the capacity region $\mathcal{C}_\textrm{EC,EH}(\bar{\boldsymbol{e}})$  is equal to the capacity region $\mathcal{C}_\textrm{g,EC}(\widetilde{\boldsymbol{p}})$ of a Gaussian MAC with energy harvesting and average power constraint $\widetilde{\boldsymbol{p}}=\frac{\bar{\boldsymbol{e}}}{T}$.
    Based on the obtained results, one can readily  characterize the capacity region $\mathcal{C}_\textrm{EC,EH}(\bar{\boldsymbol{e}})$ explicitly through investigating $\mathcal{C}_\textrm{g,EC}(\widetilde{\boldsymbol{p}})$, which could be an interesting future direction.


\appendices
\renewcommand{\theequation}{\thesection.\arabic{equation}}

\newcounter{mytempthcnt}
\setcounter{mytempthcnt}{\value{theorem}}
\setcounter{theorem}{2}

\section{Proof of Theorem 1}\label{app:prf-thm1}
\begin{proof}
The achievability of $\mathcal{C}_{\textrm{g,EC}}(\widetilde{\boldsymbol{p}})$ is clear since it is a union of  traditional Gaussian MAC capacity regions, in which each rate  is achievable.
   A weak converse can be established by proving the following statement:
if a rate vector $\boldsymbol{R}=\{ R_1,\cdots, R_K \}$ is achievable, i.e., there exists a  $((2^{nR_1}, 2^{nR_2},\cdots,2^{nR_K}), n)$ code such that the decoding error satisfies $\lim_{n\rightarrow\infty}P^{(n)}_e\rightarrow0$, then $\boldsymbol{R}$ lies in the capacity region $\mathcal{C}_{\textrm{g,EC}}(\widetilde{\boldsymbol{p}})$  defined by (\ref{cr:ener-coop}).

For a given energy cooperation policy $\mathcal{D}$, the corresponding energy transfer matrix is $\textbf{D}=[\delta_{kj}]_{K\times K}$ and the average transmit power of user $k$ is
 \begin{equation}\label{apx:p_st}
   p_k=\widetilde{p}_k-\frac1T \sum_{l=1}^K(\delta_{kl}-\alpha_{lk}\delta_{lk}).
 \end{equation}

Denote $\mathcal{C}^n=((2^{nR_1}, 2^{nR_2},\cdots,2^{nR_K}),n)$ as a codebook with code length $n$ and code size $2^{nR_k}$ for user $k\in \Phi$, $P^{(n)}_e$ as the decoding error probability.
    Assume that the message $w_k$ of each user is drawn equiprobably from the set $\{1,2,\cdots,2^{nR_K}\}$.
The encoding function  $f_{k}^n:\{1,2,\cdots,2^{nR_K}\}\rightarrow \mathcal{X}^n$ is a mapping that assigns a codeword $\boldsymbol{x}_k=[x_{k,1},x_{k,2},\cdots,x_{k,n}]^{\textsf{T}}$ to each message $w_k$. 
    The decoding function $g^n:\mathcal{Y}^n\rightarrow \{1,2,\cdots,2^{nR_K}\}\cup\{\textsf{err}\}$ is a mapping that assigns an $\hat{w_k}$ for each user or an error message $\textsf{err}$ to each received sequence $\boldsymbol{y}_k=[y_{k,1},y_{k,2},\cdots,y_{k,n}]^{\textsf{T}}$.
The  decoding error probability is defined as
\begin{eqnarray}\nonumber
  P_e^{(n)}\hspace{-3mm}&=&\hspace{-3mm}  \frac{1}{2^{n\sum_{k=1}^K R_k}}
  \sum_{j_1=1}^{2^{nR_1}}  \cdots \sum_{j_K=1}^{2^{nR_K}}
  \Pr\{ (\hat w_{j_1},\cdots,\hat w_{j_K})\\ \nonumber
  \hspace{-3mm}&&~~~~~~~~~~~~
  \hspace{-3mm}\neq (w_{j_1},\cdots,w_{j_K})|(w_{j_1},\cdots,w_{j_K}) \textrm{~sent} \}
\end{eqnarray}

Let $\boldsymbol{R}$ be some achievable rate using codebook $\mathcal{C}^n$ so that we have $\lim_{n\rightarrow\infty}P^{(n)}_e\rightarrow0$.
    We will show that $\boldsymbol{R}$ must lie in $\mathcal{C}_{\text{g}}(\widetilde{\boldsymbol{p}})$ under some energy cooperation policy $\mathcal{D}$.
Assume that messages $\boldsymbol{w}=[w_1,w_2,\cdots,w_K]$ are sent.
    The corresponding codewords is denoted as  $[\boldsymbol{X}_1,\boldsymbol{X}_2,\cdots, \boldsymbol{X}_K]$, respectively, where $\boldsymbol{X}_k=[X_{k,1},X_{k,2},\cdots,X_{k,n}]^{\textsf{T}}$.
Denote the received signal at the receiver as $\boldsymbol{Y}=[Y_{1},Y_{2},\cdots,Y_{n}]^{\textsf{T}}$.
   By Fano's inequality,
\begin{equation}\label{apx:dr-1} \nonumber
  H(\boldsymbol{w}|\boldsymbol{Y})\leq n\boldsymbol{R}(\Phi)P_e^{(n)} + H(P_e^{(n)})\leq n\epsilon
\end{equation}
where $\boldsymbol{R}(\Phi)=\sum_{k\in \Phi} R_k$ and $\epsilon\rightarrow0$ as $P_e^{(n)}\rightarrow0$.

For any subset $S\in \Phi$, we have
\begin{equation}\label{apx:dr-2}
  H([w_k]_{k\in S}|\boldsymbol{Y})\leq H(\boldsymbol{w}|\boldsymbol{Y})\leq n\epsilon.
\end{equation}

Let us denote $Y^{i-1}=(Y_1,Y_1,\cdots,Y_{i-1})$ and $\widebar{S}=\Phi\setminus S$. Consider
  \begin{align}\nonumber
    n&\boldsymbol{R}(S)\\
     \nonumber& = H([w_k]_{k\in S})= H([w_k]_{k\in S}|[w_k]_{k\in \widebar{S}}) \\
     \nonumber & = I([w_k]_{k\in S};\boldsymbol{Y}|[w_k]_{k\in \widebar{S}}) + H([w_k]_{k\in S}|[w_k]_{k\in \widebar{S}},\boldsymbol{Y})\\
     \nonumber & = \sum_{i=1}^n I([w_k]_{k\in S};Y_i|Y^{i-1},[w_k]_{k\in \widebar{S}})+ n\epsilon\\
     \nonumber & \stackrel{(\text{a})}{=}  \sum_{i=1}^n I([w_k]_{k\in S};Y_i|Y^{i-1},[w_k]_{k\in \widebar{S}},[X_{k,i}]_{k\in \widebar{S}})+ n\epsilon\\
     \nonumber & \leq \sum_{i=1}^n I([w_k]_{k\in S},[w_k]_{k\in \widebar{S}},Y^{i-1};Y_i[X_{k,i}]_{k\in \widebar{S}})+ n\epsilon\\
     \nonumber & \stackrel{(\text{b})}{=} \sum_{i=1}^n I([X_{k,i}]_{k\in S},[w_k]_{k\in S},[w_k]_{k\in \widebar{S}},  \\
     \nonumber &~~~~~~~~~~~~~~~~~~~~~~~~~
      Y^{i-1};Y_i|[X_{k,i}]_{k\in \widebar{S}})+ n\epsilon\\
     \nonumber & = \sum_{i=1}^n I([X_{k,i}]_{k\in S};Y_i|[X_{k,i}]_{k\in \widebar{S}})+ I([w_k]_{k\in S},[w_k]_{k\in \widebar{S}},  \\
     \nonumber &~~~~~~~~~~~~~~~~~~~~~~~~~
      Y^{i-1};Y_i|[X_{k,i}]_{k\in S},[X_{k,i}]_{k\in \widebar{S}})+ n\epsilon\\
      & \stackrel{(\text{c})}{=} \sum_{i=1}^n I([X_{k,i}]_{k\in S};Y_i|[X_{k,i}]_{k\in \widebar{S}}) + n\epsilon,\label{apx:dr-3}
  \end{align}
where
 (a) and (b) hold true because $[X_k]_{k\in \widebar{S}}$ and $[X_k]_{k\in S}$ are functions of $[w_k]_{k\in \widebar{S}}$ and $[w_k]_{k\in S}$, respectively,
 and (c) follows the memoryless property of the Markov chain $([w_k]_{k\in \widebar{S}}, [w_k]_{k\in S}, Y^{i-1})\rightarrow ([X_{k,i}]_{k\in \widebar{S}},[X_{k,i}]_{k\in {S}})\rightarrow Y_i$.

Therefore, we have
\begin{equation}\label{apx:dr-4}
  \boldsymbol{R}(S)=\frac1n \sum_{i=1}^n I \left( [X_{k,i}]_{k\in S};Y_i|[X_{k,i}]_{k\in \widebar{S}} \right) + \epsilon,
\end{equation}
which is a sum of average mutual information based on the empirical distributions in the $i$-th column of the codebook.

Denote the average power of the $i$-th column of the codebook for user $k$ as $P_{k,i}$.
Since $X_{k,i}=x_{k,i}(w_k)$ and $w_k$ is uniformly distributed in $\{1,2,\cdots,2^{nR_k}\}$, we have
\begin{equation}\label{apx:dr-5}
  P_{k,i}=\frac{1}{2^{nR_k}} \sum_{w_k\in \mathcal{C}^n} x_{k,i}^2(w_{k}),
\end{equation}
which satisfies the energy constraint (\ref{apx:p_st}) as $n$ goes to infinity. That is,
\begin{equation}\label{apx:dr-7}
  \frac1n \sum_{i=1}^n P_{k,i}\leq p_k=\widetilde{p}_k-\frac1T \sum_{l=1}^K(\delta_{kl}-\alpha_{lk}\delta_{lk})
\end{equation}
for each $k\in \Phi$ and $w_k\in \mathcal{C}^n$.

In the normalized channel model, the signal at the receiver is $Y_i=\sum_{k\in S}X_{k,i}+\sum_{k\in \widebar{S}}X_{k,i}+Z_i$, where $X_{k,i}$ and $Z_i$ are independent from each other.
   Thus, the power of $Y_i$ is $\sum_{k\in \Phi}P_{k,i} +1$. Accordingly, the information rate is
\begin{equation}\label{apx:dr-6} \nonumber
\begin{split}
  \boldsymbol{R}(S) &\leq \frac1n \sum_{i=1}^n \left(h(Y_i|[X_{k,i}]_{k\in \widebar{S}}) - h(Z_i)\right) + n\epsilon\\
    & \leq \frac1n \sum_{i=1}^n \frac12 \log\left( 1+ \sum_{k\in S} P_{k,i} \right)  + n\epsilon\\
    & \stackrel{(\text{a})}{\leq} \frac12 \log\left( 1+ \frac1n  \sum_{i=1}^n\sum_{k\in S} P_{k,i} \right) + n\epsilon\\
    & = \frac12 \log\left( 1+ \sum_{k\in S} \left(\widetilde{p}_k-\frac1T \sum_{l=1}^K(\delta_{kl}-\alpha_{lk}\delta_{lk}) \right) \right) + n\epsilon,
\end{split}
\end{equation}
where (a) follows   Jensen's inequality.

It is readily seen that $\boldsymbol{R}\in\mathcal{C}_{\textrm{g,EC}}(\widetilde{\boldsymbol{p}})$, which proves the converse, and thus  Theorem \ref{th:cr-ener-coop}.
\end{proof}

\section{Proof of Theorem \ref{th:capacity_eceh}}\label{app:prf-thm2}
\begin{proof}
Denote $\mathcal{C}_\textrm{g,EC}(\widetilde{\boldsymbol{p}})|_{\widetilde{\boldsymbol{p}}=\frac{\bar{\boldsymbol{e}}}{T}}$ as the capacity region of a $K$-user Gaussian MAC with energy cooperation and average power constraint ${\widetilde{\boldsymbol{p}}=\frac{\bar{\boldsymbol{e}}}{T}}$.
    Also, we denote $\mathcal{C}_\textrm{EC,EH}(\bar{\boldsymbol{e}}) $ as the capacity region of a $K$-user Gaussian MAC with both energy cooperation and energy harvesting.
To prove Theorem \ref{th:capacity_eceh}, we first show that $\mathcal{C}_\textrm{g,EC}(\widetilde{\boldsymbol{p}})|_{\widetilde{\boldsymbol{p}}=\frac{\bar{\boldsymbol{e}}}{T}}$
is an outer bound of $\mathcal{C}_\textrm{EC,EH}(\bar{\boldsymbol{e}}) $.
   Then, the proof is completed by showing that, by using a save-forward power control \cite{Ulukus-2012-awgn}, $\mathcal{C}_\textrm{g,EC}(\widetilde{\boldsymbol{p}})|_{\widetilde{\boldsymbol{p}}=\frac{\bar{\boldsymbol{e}}}{T}}$ is in fact achievable over the $K$-user Gaussian MAC with both energy cooperation and energy harvesting.

\subsection{Capacity Region Outer Bound} \label{subsec:apx_21}
Instead of standard converse argument, we present an insightful reasoning to show that $\mathcal{C}_\textrm{g,EC}(\widetilde{\boldsymbol{p}})|_{\widetilde{\boldsymbol{p}}=\frac{\bar{\boldsymbol{e}}}{T}}$ is an outer bound of the capacity region $\mathcal{C}_\textrm{EC,EH}(\bar{\boldsymbol{e}}) $.

   Let $\boldsymbol{R}$ be an achievable rate in $\mathcal{C}_\textrm{EC,EH}(\bar{\boldsymbol{e}})$ and $\boldsymbol{X}_k=(X_{k,1},\cdots,X_{k,n})$ be a codeword of user $k$.
It is clear that $\boldsymbol{X}_k$ satisfies constraint CSTR1 so that  we have $\frac 1N \sum_{i=1}^N X_{k,i}^2\leq \frac 1N \sum_{i=1}^N e_k(i)= \frac1T \bar{e}_k$.
    This means that $\boldsymbol{X}_k$  also satisfies the average power constraint.
   Therefore,  $\boldsymbol{R}$ is also achievable over a $K$-user Gaussian MAC with energy cooperation and average power constraints.

On the contrary, although a codeword $\boldsymbol{X}_k$ designed for a $K$-user Gaussian MAC with energy cooperation and average power constraints satisfies $\frac 1N \sum_{i=1}^N X_{k,i}^2 \leq \frac1T \bar{e}_k$, it does not necessarily satisfy the energy causality constraint.
   Therefore, the capacity region of a Gaussian MAC with energy cooperation and energy harvesting is bounded by the capacity region of the corresponding Gaussian MAC with energy cooperation and average power constraints, i.e.,  $\mathcal{C}_\textrm{EC,EH}(\bar{\boldsymbol{e}}) \subseteq \mathcal{C}_\textrm{g,EC}(\widetilde{\boldsymbol{p}})|_{\widetilde{\boldsymbol{p}}=\frac{\bar{\boldsymbol{e}}}{T}}$.

\subsection{Achievability}\label{subsec:apx_22}
Suppose $\boldsymbol{R}=(R_1,R_2,\cdots,R_K)$ is an achievable rate over a $K$-user Gaussian MAC with energy cooperation policy $\mathcal{D}^*$ and average power constraints $\widetilde{\boldsymbol{p}}=\frac{\bar{\boldsymbol{e}}}{T}$.
   We denote the corresponding energy transfer matrix as $\textbf{D}^*=[\delta_{kl}]_{K\times K}$ and the transmit power as $p_k=\widetilde{p}_k-\frac1T \sum_{l=1}^K(\delta_{kl}-\alpha_{lk}\delta_{lk}) $.
Next, we shall prove that $\boldsymbol{R}$ is also achievable in the $K$-user Gaussian MAC with energy cooperation and energy harvesting, where the  energy harvesting rate is $\bar{\boldsymbol{e}}=[e_1,\cdots,e_K]$.  

We consider $N$ slots of information transmission, where each slot consists $m$ symbols.
    In the save-and-transmit scheme \cite{Ulukus-2012-awgn}, information transmission is performed in two phases: the energy saving phase and the information transmission phase.
In the energy saving phase of $h(N)\in o(N)$ slots, all the harvested energy is stored in the energy buffer and we set the consumed power $\widetilde {p}_k^\textrm{avg}$ to zero. Thus, no information is transmitted.
    In the information transmission phase of $N-h(N)$ slots, we set the consumed power of user $k$ as $\widetilde {p}_k^\textrm{avg}=\widetilde{p}_k-\varepsilon$, where $\varepsilon$ is an arbitrarily small positive number.
In particular, $h(N)$ is chosen such that both $h(N)$ and $N-h(N)$ go to infinity as $N\rightarrow\infty$.
   Under energy cooperation policy $\mathcal{D}^*$, the transmit power of user $k$ would be $p_k^\textrm{avg}=\widetilde{p}_k^\textrm{avg}-\frac1T \sum_{l=1}^K(\delta_{kl}-\alpha_{lk}\delta_{lk}) = p_k-\varepsilon$.

Note that although there are $mN$ symbols in each codeword, the size of the message set is only $2^{m(N-h(N))R_k}$.
        We assume that each message appears with equal probability and denote $\boldsymbol{X}_{k}=(X_{k,1},X_{k,2},\cdots,X_{k,mN})$ as the codeword of user $k$.
    In the save-and-transmit scheme, we have $X_{k,i}=0$ for $1\leq i \leq mh(N)$.
For $mh(N)+1 \leq i \leq N$, the symbols are selected as  independent samples  of a Gaussian distribution with zero mean and variance $\mathbb{E}[X_{k,l}^2]=p_k^\textrm{avg}$.
   Using successive decoding, it is known that the decoding error $\epsilon_1$  goes to zero as $N$ goes to infinity.
As a result, user $k$ can transmit $\log(2^{m(N-h(N))R_k})=m(N-h(N))R_k$ nats by $mN$ symbols.
   Thus, the overall data rate of user $k$ is $r_k=\frac{m(N-h(N))R_k}{mN}$, which approaches $R_k$ as $N$ goes to infinity.

Based on these analysis, the achievability can be proved by showing that the totally harvested energy in $N$ slots is sufficient for the transmission phase.
   That is, given $\widetilde{p}_k^\textrm{avg}=\widetilde{p}_k-\varepsilon$ and $\epsilon>0$, we need to show
\begin{equation}\label{apx:dr3-1}
  \Pr\left( \bigcup_{k=1}^K \bigcup_{i=1}^N A_{k,i} \right)\leq \epsilon,
\end{equation}
  where $A_{k,i}$ is the event that in slot $i$, the available energy of user $k$ is less than the required energy to transmit the first $mi$ symbols of its codeword and to perform energy cooperation.
    That is,
 \begin{eqnarray}\label{apx:dr3-2}\nonumber
   A_{k,i} \hspace{-3mm} &=& \hspace{-3mm} \Bigg\{ \hspace{-1mm} \sum_{j=1}^i \frac{e_k[j]}{T} <   \\ \nonumber
            &&  \hspace{-1mm}
            \sum_{j=1}^i \hspace{-0.5mm} \left( \hspace{-0.5mm}  \frac1m \hspace{-0.5mm}  \sum_{l_1=1}^m \hspace{-0.5mm}  X^2_{k,jm+l_1}  \hspace{-0.5mm}  + \hspace{-0.5mm}
                   \frac1T \hspace{-0.5mm}  \sum_{l_2=1}^K (\delta_{kl_2} \hspace{-0.5mm}  - \hspace{-0.5mm}  \alpha_{l_2k}\delta_{l_2k}) \right) \hspace{-1mm}  \Bigg\}.
 \end{eqnarray}

Since users do not transmit signal in the energy saving phase, we have
\begin{equation}\label{apx:dr3-3}
  \Pr\left( \bigcup_{k=1}^K \bigcup_{i=1}^{h(N)} A_{k,i} \right)=0.
\end{equation}

We denote $e_k^\textrm{sv}= \sum_{j=1}^{h(N)}e_k(j)$ as the totaly saved energy in the energy saving phase.
    For each slot in the information transmission phase, i.e.,  $h(N)+1 \leq j \leq N$, we denote the difference between the available power and the required power to transmit a symbol and to perform energy cooperation as
$$\Delta_k(j)= \frac{e_k(j)}{T} - \left( \frac1m \sum_{l_1=1}^m X^2_{k,j m+l_1} \hspace{-1mm} +\hspace{-1mm} \frac1T \sum_{l_2=1}^K (\delta_{kl_2-\alpha_{l_2k}\delta_{l_2k}}) \right).$$

Since $\mathbb{E}[X_{k,j}^2]=p_k^\textrm{avg}=\widetilde{p}_k^\textrm{avg}- \frac1T \sum_{l_2=1}^K (\delta_{kl_2}-\alpha_{l_2k}\delta_{l_2k})$ and $ \mathbb{E}[\frac{e_k(j)}{T} ]=\widetilde{p}_k$,
we have $\mathbb{E}[\Delta_k(j)] =\widetilde{p}_k-\widetilde{ p}_k^\textrm{avg}$.
   By the weak law of large numbers \cite{PRP-2001}, there exists a sufficiently large $j_0$ such that,
\begin{equation}\nonumber
   \begin{split}
      &\Pr\left\{\bigcup_{i= h(N)+j_0}^N A_{k,i} \right\} \\
      & =   \Pr \left\{ \bigcup_{ j= h(N)+j_0}^N
            \left( \frac{e_k^\textrm{sv}}{T} +  \sum_{j=h(N)+1}^i \Delta_k(j)  <0  \right)\right\} \\
      &\stackrel{(\text{a})}{<} \Pr\left\{  \bigcup_{i= h(N)+j_0}^N \left( \sum_{j=h(N)+1}^i \Delta_k(j) <0 \right) \right\} \\
       &<\Pr\Bigg\{  \bigcup_{i= h(N)+j_0}^N  \Bigg(  \Bigg| \sum_{j=h(N)+1}^i  \frac{\Delta_k(j)}{i-h(N)-1}  -   \\
                                &~~~~~~~~~~~~~~~~~~~~~~~~~~~~~~
                                    (\widetilde{p}_k -  \widetilde{ p}_k^\textrm{avg}) \Bigg| >
                                    \widetilde{p}_k - \widetilde{ p}_k^\textrm{avg}  \Bigg) \Bigg\}
                                <\epsilon',
   \end{split}
\end{equation}
where (a) is because $e_k^\textrm{sv}>0$.

Denote $c_0=\sum_{j=h(N)+1}^{j=h(N)+j_0} |\Delta_{k}(j)|$.
  Since $j_0$ is a fixed number, it is clear that $\lim_{N\rightarrow\infty}\frac{c_0}{h(N)}=0$.
For $h(N)+1\leq i \leq  h(N)+j_0$, we have
\begin{equation}\nonumber
  \begin{split}
      &\Pr\left\{ A_{k,i}\right\}=\Pr\left\{ \frac{ e_k^\textrm{sv}}{T} +  \sum_{j=h(N)+1}^i \Delta_k(j) <0 \right\} \\
                                &\stackrel{(\text{a})}{<}\Pr\left\{  \left|\frac{ e_k^\textrm{sv}}{Th(N)}-\widetilde{p}_k \right| > \widetilde{p}_k  +
                                            \frac{\sum_{j=h(N)+1}^i \Delta_k(j)}{h(N) } \right\} \\
                                &<\epsilon'',
   \end{split}
\end{equation}
where (a) follows the weak law of large numbers.

By the union bound, we have
\begin{equation}\nonumber
\begin{split}
  & \Pr\Bigg(  \bigcup_{k=1}^K \bigcup_{i=1}^N A_{k,i} \Bigg) \\
            &\leq \Pr\left( \bigcup_{k=1}^K \bigcup_{i=1}^{h(N)} A_{k,i} \right) +
                    \sum_{k=1}^K \sum_{i=h(N)+1}^{h(N)+j_0} \Pr\left( A_{k,i} \right)+   \\
             &  ~~~~
             \sum_{k=1}^K\Pr\left( \bigcup_{i=h(N)+j_0}^{N} A_{k,i} \right)\\
             & < 0 + Kj_0 \epsilon'' + \epsilon'\\
             & \triangleq  \epsilon,
\end{split}
\end{equation}
which proves the achievability of the capacity region.

It is clear that Theorem \ref{th:capacity_eceh} is proved by combining the proofs in Appendix \ref{subsec:apx_21} and \ref{subsec:apx_22}.
\end{proof}

\small{

}

\begin{biography}[{\includegraphics[width=1in,height=1.25in,clip,keepaspectratio]
{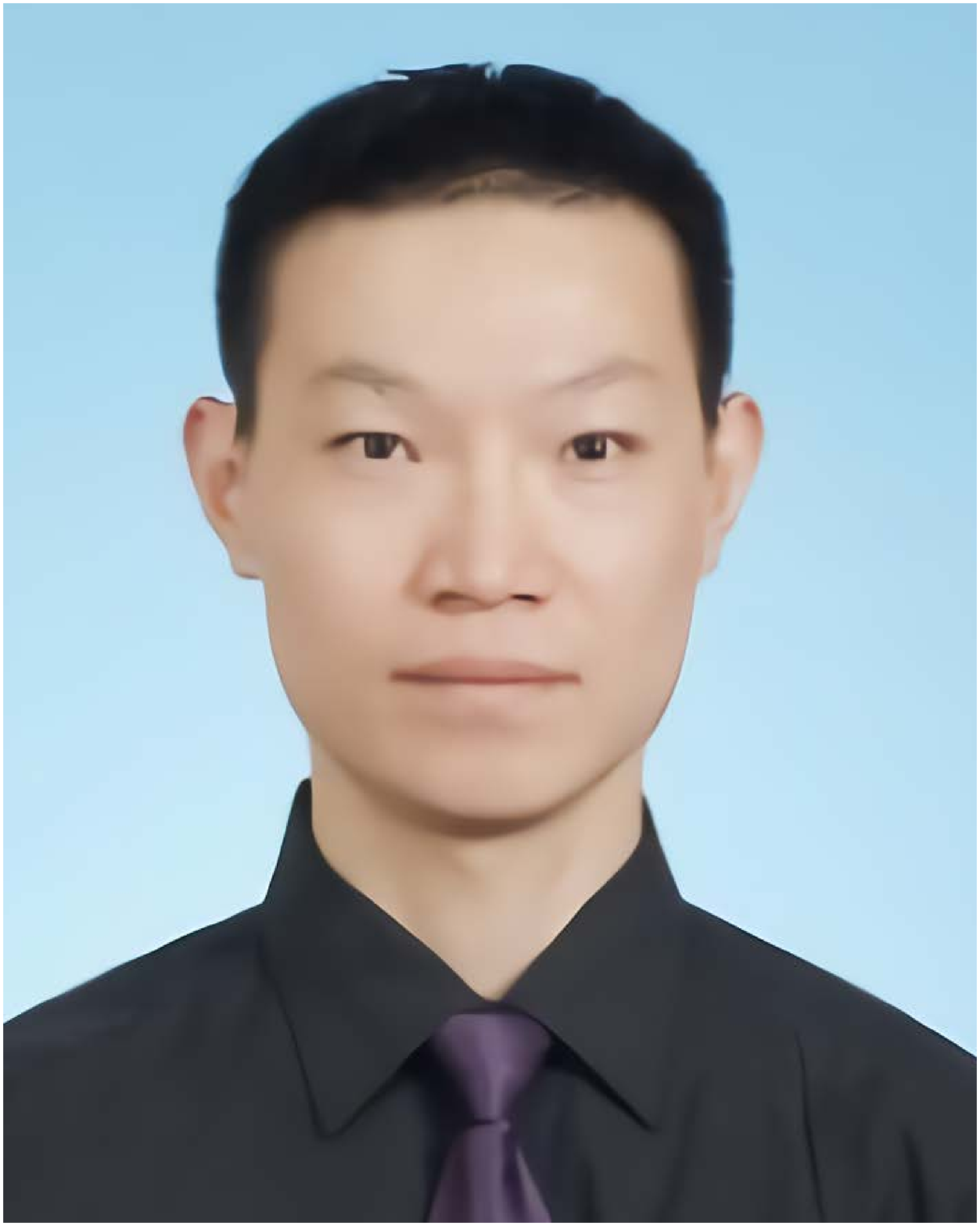}}]{Yunquan Dong} (M'15) received the M.S. degree in communication and information systems from the Beijing University of Posts and Telecommunications,
Beijing, China, in 2008, and the Ph.D. degree in communication and information engineering from Tsinghua University, Beijing, in 2014.
    He was a BK Assistant Professor with the Department of Electrical and Computer Engineering, Seoul National University, Seoul, South Korea.
He is currently a Professor with the School of Electronic and Information Engineering, Nanjing University of Information Science and Technology, China.

His research interests include heterogeneous cellular networks and energy harvesting communication systems.
    He was a recipient of the Best Paper Award of the IEEE ICCT in 2011, the National Scholarship for Postgraduates from China¡¯s Ministry of Education in 2013, the Outstanding Graduate Award of Beijing with honors in 2014, and the Young Star of Information Theory Award from the China¡¯s Information Theory Society in 2014.
\end{biography}

\begin{IEEEbiography}
[{\includegraphics[width=1in,height=1.34in]{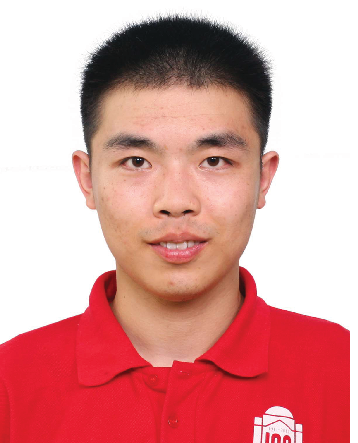}}]{Zhengchuan
Chen}(M'16) received the B.S. degree from Nankai University, Tianjin, China, in 2010 and the Ph.D. degree from Tsinghua University, Beijing, China, in 2015. From September 2012 to January 2013, he visited the Institute of Network Coding, The Chinese University of Hong Kong, Hong Kong, as a Research Assistant. From November 2013 to December 2014, he visited the Department of Electrical and Computer Engineering, University of Florida, Gainesville, FL, USA, as a Research Scholar. He is currently a Postdoctoral Research Fellow with the Information Systems Technology and Design Pillar, Singapore University of Technology and Design (SUTD), Singapore. His main research interests include wireless cooperative networks, energy harvesting, and network information theory.

Dr. Chen has served several IEEE conferences, e.g., the IEEE Global Communications Conference (Globecom); the International Symposium on Personal, Indoor, and Mobile Radio Communications (PIMRC); and the International Conference on Communications in China, as a Technical Program Committee Member. He is a reviewer for several journals of the IEEE Communications Society and was selected as an Exemplary Reviewer of the IEEE Transactions on Communications in 2015. He has received the National Scholarship for both undergraduates and postgraduates of China. He coreceived the Best Paper Award at the International Workshop on High Mobility Wireless Communications in 2013.
\end{IEEEbiography}

\begin{biography}[{\includegraphics[width=1in,height=1.25in,clip,keepaspectratio]
{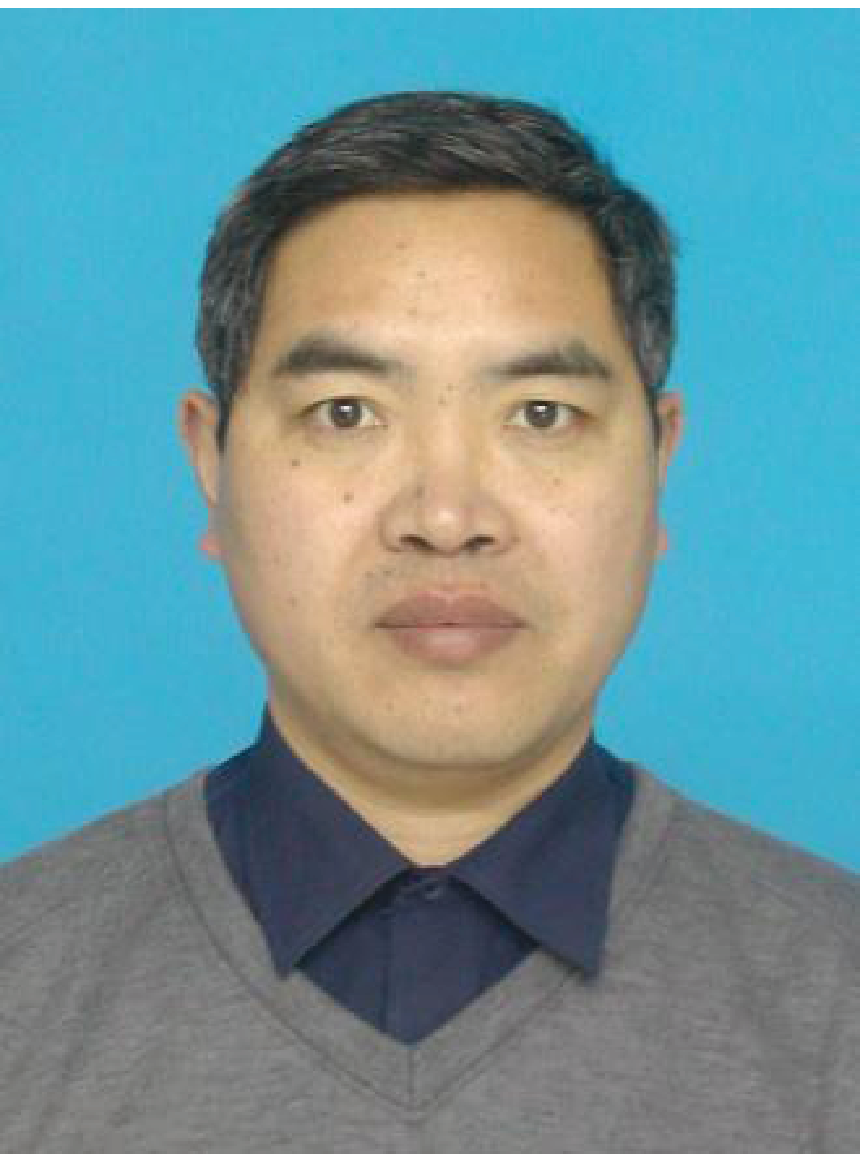}}]{Pingyi Fan} (M¡¯03-SM¡¯09) received the B.S and M.S. degrees from the Department of Mathematics of Hebei University in 1985 and Nankai University
in 1990, respectively, received his Ph.D degree from the Department of Electronic Engineering, Tsinghua University, Beijing, China in 1994.
    He is a professor of department of EE of Tsinghua University currently.
From Aug. 1997 to March. 1998, he visited Hong Kong University of Science and Technology as Research Associate. From May. 1998 to Oct. 1999, he visited University of Delaware, USA, as research fellow. In March. 2005, he visited NICT of Japan as visiting Professor. From June. 2005 to May 2014, he visited Hong Kong University of Science and Technology for many times. From July 2011 to Sept. 2011, he is a visiting professor of Institute of Network Coding, Chinese University of Hong Kong.

    Dr. Fan is a senior member of IEEE and an oversea member of IEICE.
He has attended to organize many international conferences including as General co-Chair of IEEE VTS HMWC2014, TPC co-Chair of IEEE International Conference on Wireless Communications, Networking and Information Security (WCNIS 2010) and TPC member of IEEE ICC, Globecom, WCNC, VTC, Inforcom etc. He has served as an editor of IEEE Transactions on Wireless Communications, Inderscience International Journal of Ad Hoc and Ubiquitous Computing and Wiley Journal of Wireless Communication and Mobile Computing. He is also a reviewer of more than 30 international Journals including 20 IEEE Journals and 8 EURASIP Journals.
He has received some academic awards, including the IEEE Globecom¡¯14 Best Paper Award, IEEE WCNC¡¯08 Best Paper Award, ACM IWCMC¡¯10 Best Paper Award and IEEE ComSoc Excellent Editor Award for IEEE Transactions on Wireless Communications in 2009.
     His main research interests include B5G technology in wireless communications such as MIMO, OFDMA, Network Coding, Network Information Theory and Big Data Analysis etc.
\end{biography}

\end{document}